\renewcommand{\vec}[1]{\mathbf{#1}}
\newcommand{\gvec}[1]{\boldsymbol{#1}}
\newcommand{\ellts}{_{L^2}^2}
\newcommand{\ellt}{_{L^2}}
\newcommand{\rr}{{\mathbb R}}
\newcommand{\uu}{\vec{u}}
\newcommand{\ff}{\vec{f}}
\newcommand{\pp}{{\mathrm P}}
\newtheorem{theorem}{Theorem}
\newtheorem{lemma}{Lemma}
\journal{Physica D}
\begin{document}

\begin{frontmatter}

\title{Turbulence properties and global regularity of a modified Navier-Stokes equation}

\author[rub]{Tobias Grafke}
\author[rub]{Rainer Grauer}
\ead{grauer@tp1.rub.de}
\author[ucsb]{Thomas C. Sideris}

\address[rub]{Institut f\"ur Theoretische Physik I, Ruhr-Universit\"at Bochum, Germany}
\address[ucsb]{Department of Mathematics, University of California, Santa Barbara, USA}

\begin{abstract}
  We introduce a modification of the Navier-Stokes equation that has the
  remarkable property of possessing an infinite number of conserved quantities
  in the inviscid limit. This new equation is studied numerically and turbulence
  properties are analyzed concerning energy spectra and scaling of structure
  functions. The dissipative structures arising in this new equation are curled
  vortex sheets contrary to vortex tubes arising in Navier-Stokes turbulence.
  The numerically calculated scaling of structure functions is compared with a
  phenomenological model based on the She-L\'ev\^eque approach.

  Finally, for this equation we demonstrate global well-posedness for
  sufficiently smooth initial conditions in the periodic case and in $\mathbb
  R^3$. The key feature is the availability of an additional estimate which
  shows that the $L^4$-norm of the velocity field remains finite.
\end{abstract}

\begin{keyword}
  PDEs \sep turbulence \sep Navier-Stokes equation \sep Burgers equation \sep regularity
\end{keyword}

\end{frontmatter}


\section{Introduction}

In this paper, we introduce a new equation which is a hybrid of the
Navier-Stokes equation and the Burgers equation. This equation possesses a
nonlocal part in the nonlinearity analogous to the pressure term in the
Navier-Stokes equation. This new equation has the remarkable feature of
possessing an infinite number of conserved quantities in the inviscid limit.

Turbulence properties of this equation are analyzed using numerical simulations.
We calculate energy spectra and scaling of higher order structure functions. A
key observation from the numerical simulations of this modified Navier-Stokes
equation is that the most dissipative structures consist of curled vortex sheets
contrary to vortex tubes in conventional Navier-Stokes turbulence. Using this
information, a She-L\'ev\^eque type model is derived and compared to the numerically
obtained scaling of higher order structure functions.

In addition, for this new equation we can show existence and regularity for
$H^1$ initial conditions of arbitrary size.  This will be carried out in
${\mathbb R}^3$ and periodic domains in ${\mathbb R}^3$. The simple modification
of the nonlinearity makes the proof of global solutions possible, insofar as an
additional estimate is available showing that the solution remains finite in
$L^p$, $2<p<\infty$. With $p=4$, this is then coupled with standard estimates
for the $H^1$-norm to complete the proof.

The outline of this paper is as follows: in section 2 we motivate and introduce
our new equation. Turbulence statistics and phenomenological modeling is
considered in section 3. Section 4 contains the proof of existence of global
solutions. We finish with remarks on possible further consequences of the
existence of an infinite number of conserved quantities.


\section{Model equation}

We consider  a three-dimensional domain $\Omega$
which shall be either $\mathbb R^3$ or a bounded cube in $\mathbb R^3$ with
periodic boundary conditions.
Let
$\mathrm{P} = 1 -  \Delta^{-1} \nabla \otimes\nabla$ be the Leray-Hopf
projection operator (with periodic boundary conditions when $\Omega$ is bounded):
\begin{equation}
	\mathrm{P} [\mathrm{P} [ \vec{u} ]] = \mathrm{P} [ \vec{u} ] \;\; , \;\;\;
	\nabla\cdot \mathrm{P} [ \vec{u} ] = 0 \; .
\end{equation}
The usual incompressible Navier-Stokes equation
\begin{displaymath}
	\frac{\partial}{\partial t} \vec{v} + \vec{v}\cdot\nabla\vec{v} + \nabla p = \nu \Delta \vec{v}+\vec{f}
	\;\; , \;\;\; \nabla\cdot \vec{v} = 0
\end{displaymath}
can be written with the projection operator $\mathrm{P}$
\begin{displaymath}
	\frac{\partial}{\partial t} \vec{v} + \mathrm{P} [\vec{v}\cdot\nabla\vec{v}] = \nu \Delta
	\vec{v} +\pp[\vec{f}]\;\; , \;\;\; \nabla\cdot \vec{v} = 0
\end{displaymath}
such that no explicit pressure term is present in the equation.

We can rewrite the Navier-Stokes equation without the
incompressibility constraint in the form
\begin{equation}
  \label{eq:ns}
  \frac{\partial}{\partial t} \vec{u} +  \mathrm{P} [ \vec{u} ] \cdot \nabla  \mathrm{P} [ \vec{u} ] 
  = \nu \Delta \vec{u} + \vec{f}\;,
\end{equation}
where the solution of the Navier-Stokes equation can be recovered by taking $\vec{v}=P[\vec{u}]$.

The  equation (\ref{eq:ns})  can be compared with Burgers equation
whose structure is formally similar:
\begin{equation}
  \label{eq:bu}
  \frac{\partial}{\partial t} \vec{u} +  \vec{u} \cdot \nabla  \vec{u} = \nu \Delta
  \vec{u} + \vec{f} \; .
\end{equation}
For equation (\ref{eq:bu}) the nonlinearity is purely local, whereas
for equation (\ref{eq:ns}) the nonlinear interaction involves the
nonlocal projection.

A natural hybrid of these two equations leads a new model equation
involving a compressible velocity field $\vec{u}$ that is convected by
its solenoidal part $\mathrm{P} [ \vec{u} ]$:
\begin{equation}
  \label{eq:eb}
  \frac{\partial}{\partial t} \vec{u} + \mathrm{P} [ \vec{u} ] \cdot \nabla \vec{u} = \nu \Delta
  \vec{u} + \vec{f} \;.
\end{equation}
More accurately this means: The convection of the velocity field
$\vec{u}$ is local in position space, but the projection operator is
local in Fourier space and thus shares this mixture of local and
non-local interactions with the original Navier-Stokes equation.


\section{Turbulence statistics}

By construction the presented model equation is an intermediate step
between the Navier-Stokes and Burgers equation, which in turn differ
significantly in their dynamical evolution and turbulent behavior. In
Navier-Stokes turbulence, on the one hand, the most dissipative
structures are vortex filaments, while for Burgers equation shocks
dominate the turbulent flow. It is of obvious interest in how far our
model equation bridges between those, which structures are the most
dominant for turbulent flows and how these structures influence the
turbulence statistics. We therefore extend the She-L\'ev\^eque
reasoning, which describes Navier-Stokes and Burgers turbulence well,
to our model equation and test it against numerical simulations by
comparing the scaling exponents of the structure functions.

Numerical simulations are carried out with a second-order in space
finite difference scheme with a strongly stable third-order
Runge-Kutta time integration with resolutions up to $512^3$. The
initial conditions were chosen as Orszag-Tang-like (see
\cite{politano-pouquet-sulem:1995}) large-scale perturbations:
\begin{eqnarray}
\label{eq:Nsc}
  u_x &=& A \left( -2 \sin(2y) + \sin(z) + 2 \cos(2y) + \cos(z) \right) \nonumber \\
  u_y &=& A \left( -2 \sin(x) + \sin(z) + 2 \cos(x) + \cos(z) \right) \nonumber \\
  u_z &=& A \left( \sin(x) + \sin(y) -2 \cos(2x) + \cos(y) \right). \nonumber
\end{eqnarray}
For simplicity and comparability both velocity and its solenoidal
projection are set to equal values. The physical domain stretches from
$-\pi$ to $\pi$; the above defined conditions, thus, are both
large-scale perturbations and periodic. All hydrodynamical models will
be simulated in comparison, using these initial conditions. We
consider only decaying turbulence without external forces. For the
parameters of all performed runs see Table~\ref{tab:runs}, which shows
the numerical value of the quantities at the time of maximum enstrophy
$t=t_\mathcal{E}$.

\begin{table}[h]
  \centering
  \begin{tabular}{c|ccccccccc}

    PDE     & $N$ & $\Delta x$ & $t_\mathcal{E}$ & $\nu$  & $v_\mathrm{rms}$ & $\varepsilon$ & $L$     & $\eta$   & $R_\lambda$ \\
    \hline \hline
    Burgers & 128 & 0.0491     & 0.461           & 0.0241 & 1.761            & 3.564         & 1.533   & 0.0445   & 40.99       \\
            & 256 & 0.0245     & 0.603           & 0.015  & 1.637            & 3.320         & 1.320   & 0.0318   & 46.49       \\
            & 512 & 0.0123     & 0.634           & 0.009  & 1.656            & 3.361         & 1.352   & 0.0216   & 61.10       \\
    \hline                                                         
    Model   & 128 & 0.0491     & 1.438           & 0.01   & 1.316            & 0.983         & 2.318   & 0.0318   & 67.65       \\
    equation& 256 & 0.0245     & 1.459           & 0.006  & 1.369            & 1.044         & 2.459   & 0.0213   & 91.75       \\
            & 512 & 0.0123     & 1.624           & 0.0036 & 1.370            & 1.096         & 2.344   & 0.0144   & 115.7       \\
    \hline                                                         
    Navier- & 128 & 0.0491     & 2.502           & 0.007  & 1.482            & 0.918         & 3.549   & 0.0247   & 106.2       \\
    Stokes  & 256 & 0.0245     & 2.616           & 0.00278& 1.518            & 1.268         & 2.761   & 0.0114   & 150.4       \\
            & 512 & 0.0123     & 2.545           & 0.00110& 1.551            & 1.572         & 2.376   & 0.00539  & 224.2       \\
    \hline \hline
  \end{tabular}
  \caption[Overview for all simulations]{
    Parameters of the numerical simulations. 
    number of collocation points $N^3$;
    grid spacing $dx$;
    time of fully developed turbulence $t_\mathcal{E}$;
    viscosity $\nu$;
    root-mean-square velocity $v_{rms} = \sqrt{2/3 E_{kin}}$;
    mean energy dissipation rate $\varepsilon$;
    integral scale $L = (2/3 E_{kin})^{3/2}/\varepsilon$;
    dissipation length scale $\eta = (\nu^3/\varepsilon)^{1/4}$;
    Taylor-Reynolds number $R_\lambda = \sqrt{15 v_{rms} L / \nu}$;
    all taken at the time of maximum enstrophy $t=t_\mathcal{E}$.
    }
  \label{tab:runs}
\end{table}

\begin{figure}[h]
\begin{center}
  \includegraphics[width=0.45\textwidth]{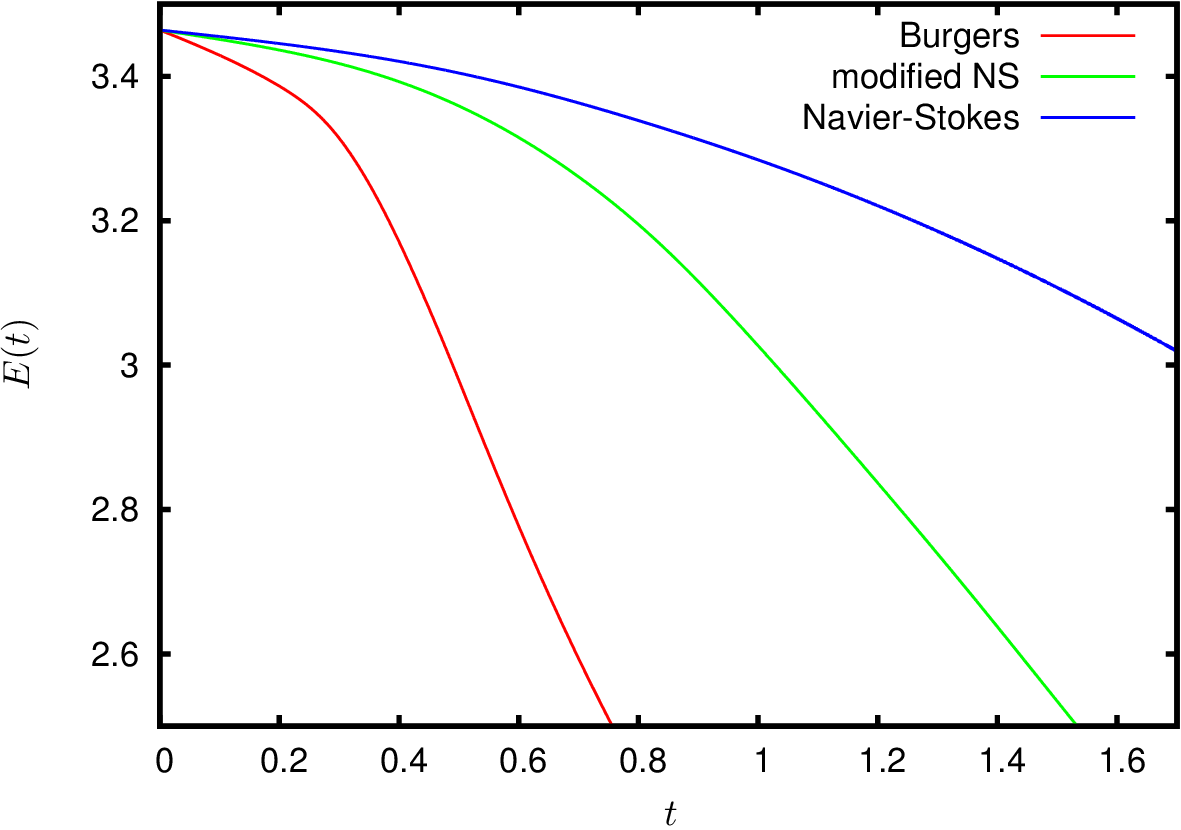}
  \includegraphics[width=0.45\textwidth]{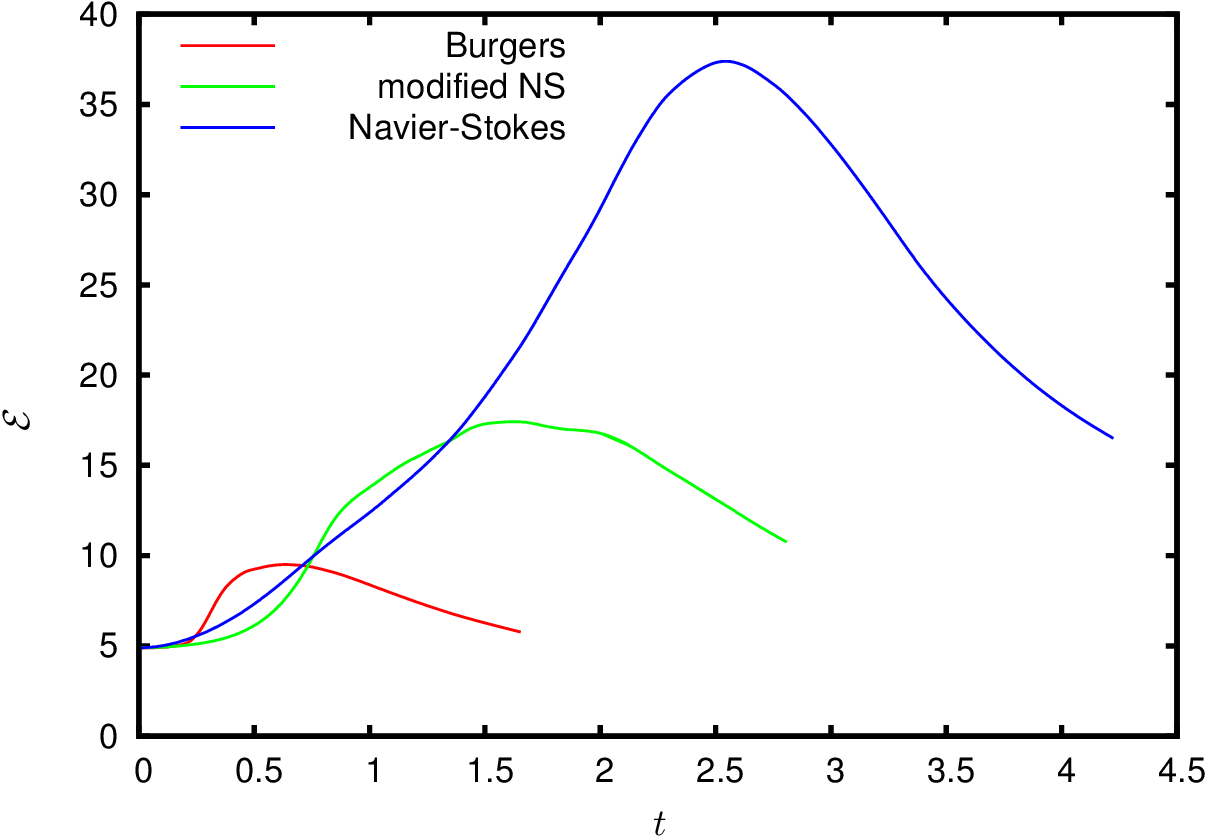} 
  \caption[Comparison of energy dissipation for Burgers equation and
  the proposed model and the time-development of enstrophy for Burgers
  equation]{\textbf{Left:} Comparison of the kinetic energy for
    Navier-Stokes, Burgers and the proposed model equation. The steep
    discontinuities of a Burgers flow explain the fast energy
    dissipation. \textbf{Right:} Evolution of free-falling turbulence
    for Burgers equation, the proposed model equation and
    Navier-Stokes equation. The turbulent flow is fully developed at
    $t=t_\mathcal{E}$, when the enstrophy $\mathcal{E}$ reaches its
    maximum.}
  \label{fig:Tenergydecay}
\end{center}
\end{figure}

Figure~\ref{fig:Tenergydecay} shows the decay of kinetic energy for
the considered hydrodynamical models in comparison. The tendency of
Burgers turbulence to form shocks and the dissipative nature of these
structures lead to a faster energy decay compared to the Navier-Stokes
equation. The new model equation exhibits a less violent form of
dissipation, its energy decay lies in between the others. The
difference in turbulence development is identified in a more precise
way when comparing the time $t_\mathcal{E}$ of maximum enstrophy
$\mathcal{E} = \int_{\Omega} \gvec{\omega}^2 \mathrm{d}x$. As Figure
\ref{fig:Tenergydecay} (right) indicates, the enstrophy of Burgers
turbulence reaches its peak significantly faster than for the
Navier-Stokes equation, in which vortex filaments dominate the
turbulent flow. The proposed model equation ranges between them. This
hints at the development of coherent structures at a timescale slower
than shock-formation of Burgers equation but faster than the formation
of vortex tubes for Navier-Stokes equation. Precise values for the
timescales are stated in Table~\ref{tab:runs}.

\begin{figure}[h]
  \centering
  \includegraphics[width=.3\textwidth]{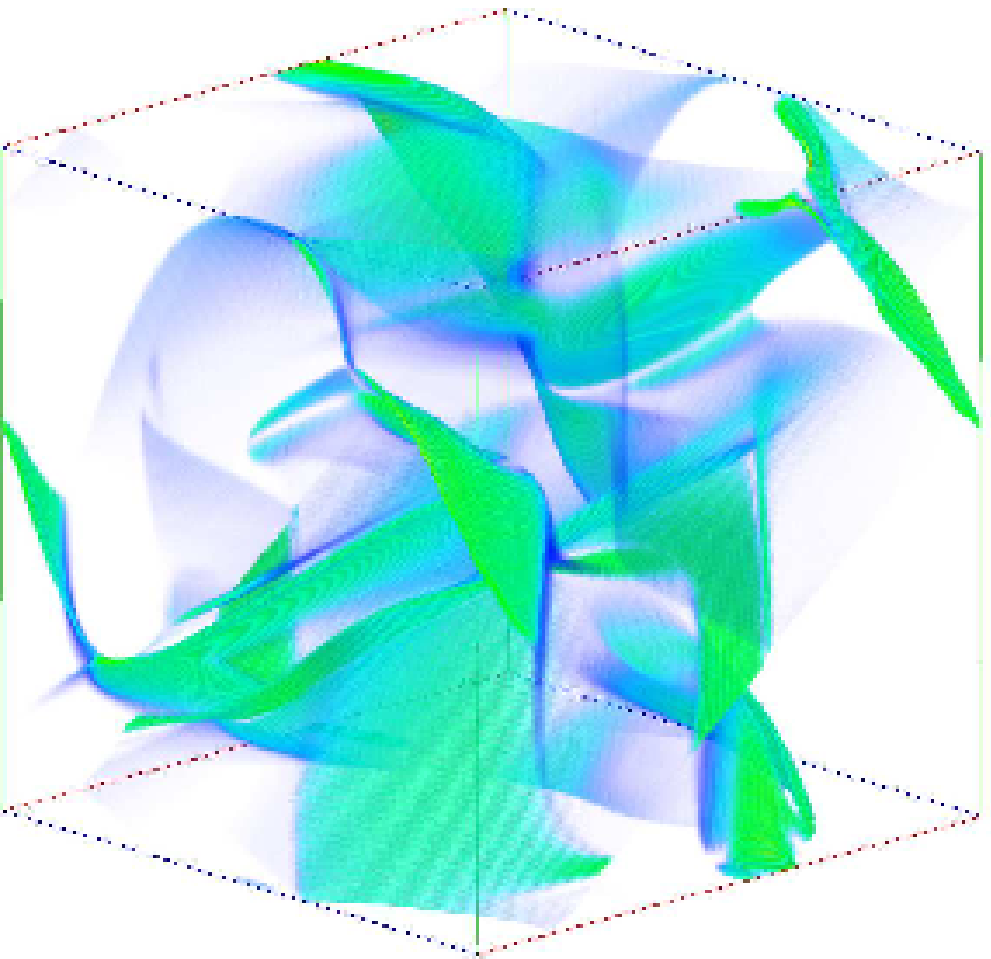}
  \includegraphics[width=.3\textwidth]{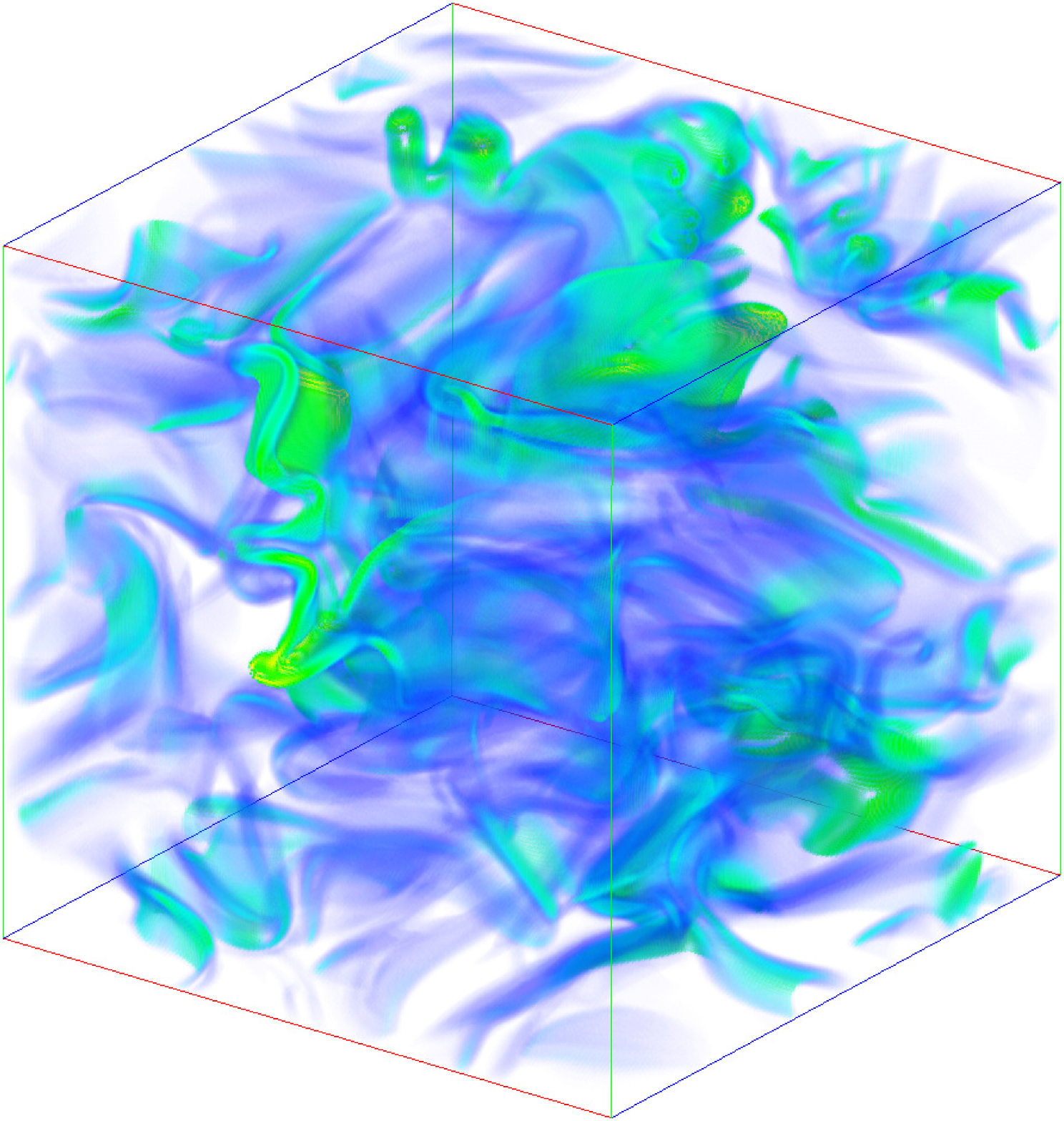}
  \includegraphics[width=.3\textwidth]{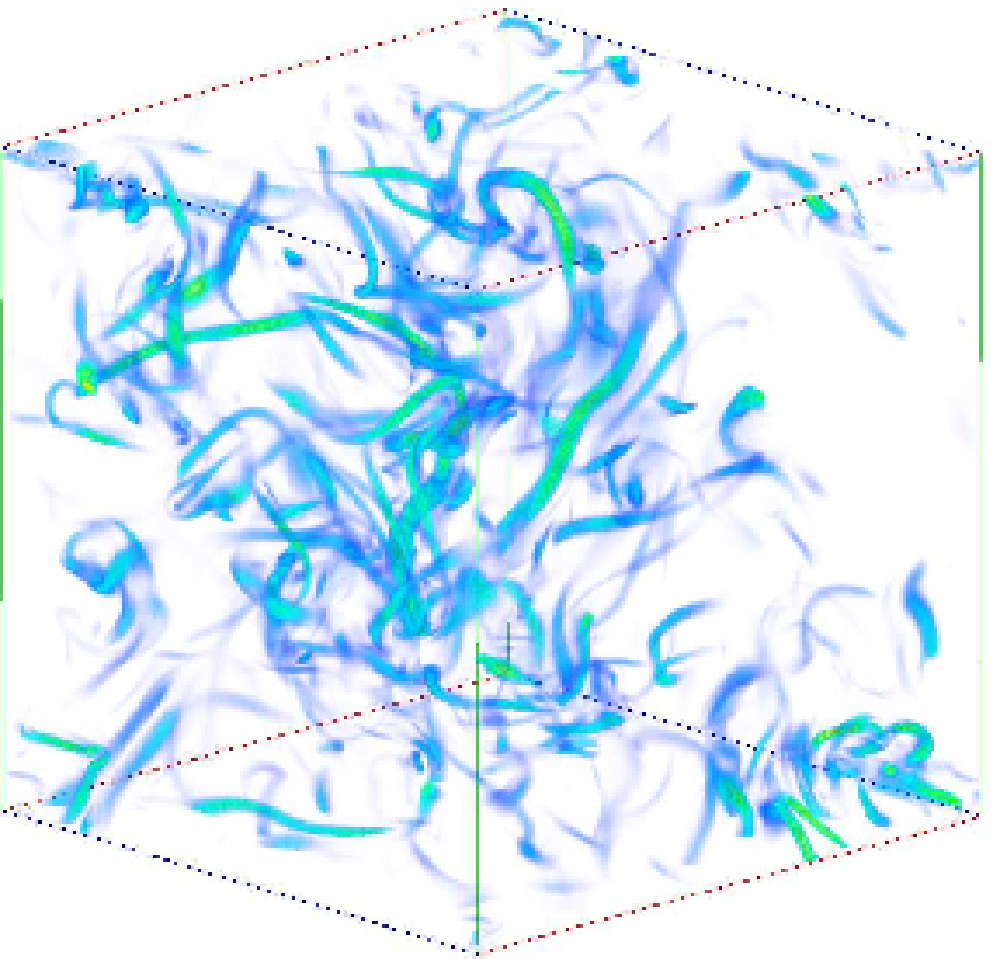}
  \caption{Dissipative structures in fully developed turbulence for
    different hydrodynamical models. \textbf{Left:} 2-dimensional
    shock fronts in Burgers turbulence. \textbf{Middle:} Folded vortex
    sheets in the new model equation. \textbf{Right:} Vortex filaments
    in Navier-Stokes turbulence.}
  \label{fig:vol}
\end{figure}

Figure~\ref{fig:vol} depicts a volume render of the fully developed
turbulence, the snapshot in each case taken at $t=t_\mathcal{E}$. As expected,
the Burgers flow (left) is dominated by shocks and the Navier-Stokes
flow (right) consists of vortex filaments. For the proposed model
equation (middle), the most dominant structures are two-dimensional
folded vortex-sheets.

A phenomenological description, which takes into account the most dissipative
structures of the turbulent flow, is the model of She and L\'ev\^eque
\cite{she-leveque:1994} connected to log-Poisson statistics of the local energy
dissipation \cite{dubrulle:1994}. They state that the scaling exponent
$\zeta_p$ of the p-th structure function behaves like
\begin{equation}
\label{eq:generalsheleveque}
\zeta_p = \frac{(1-k)p}{3} + C_0\left(1-\left(\frac{C_0 - k}{C_0}\right)^\frac{p}{3}\right),
\end{equation}
where $C_0$ is the co-dimension of the most dissipative structures in
the evolved flow and $k$ is the time-scaling exponent. This formula
will be referred to as the \textit{She-L\'ev\^eque model} even though in
\cite{she-leveque:1994} it is applied exclusively to the Navier-Stokes
equation (where $C_0 = 2,\; k=2/3$).

The typical time $t_l$ for the evolution of discontinuities for
turbulent Burgers flows scales linearly with $l$, which accounts for
$k=1$. As the shocks traveling through the domain are two-dimensional
we furthermore obtain $C_0=1$. Inserting this into equation
(\ref{eq:generalsheleveque}) leads to $\zeta_p = 1$. Since parts of the
velocity field are continuous, for $p<1$ the smoother regions of the
velocity field are pronounced. This would equal a scaling exponent of
$\zeta_p = p$ for these orders. Since this result is smaller than
$\zeta_p=1$, it is dominant for $p<1$. Therefore, we get
\begin{equation}
\zeta_p=
    \begin{cases}
      p & \text{for} p < 1\\
      1 & \text{for} p \ge 1
    \end{cases}
\label{eq:sheleveque_B}
\end{equation}
for Burgers equation.

\begin{table}[h]
  \centering
  \begin{tabular}{c|cccccc}
    p & 1 & 2 & 4 & 5 & 6 & 7 \\
    \hline
    $\alpha_p$ & 0.70 & 0.95 & 1.00 & 1.00 & 0.98 & 1.00 \\
    $\zeta_p$ & 1.00 & 1.00 & 1.00 & 1.00 & 1.00 & 1.00 \\    
  \end{tabular}
  \caption[Scaling exponents $\zeta_p$ for Burgers equation.]
  {Scaling exponents $\zeta_p$ for Burgers equation.
    The $\alpha_p$ measured with ESS are compared to the
    $\zeta_p$ predicted by the She-L\'ev\^eque
    model.}
  \label{tab:zetap_B}
\end{table}

Table~\ref{tab:zetap_B} shows the measurements of a direct numerical
simulation with $512^3$ grid points. Here, $\alpha_p$ is the data
obtained via ESS and $\zeta_p$ is the prediction of equation
(\ref{eq:sheleveque_B}). The visualization of this result is shown in
figure~\ref{fig:zetap} on the left. Especially for high order of $p$
($p>3$) the scaling exponent agrees with the prediction, yet smears
out for smaller $p$ (see also \cite{mitra-bec-pandit-frisch:2005}).

\begin{figure}[h]
  \centering
  \includegraphics[width=0.49\textwidth]{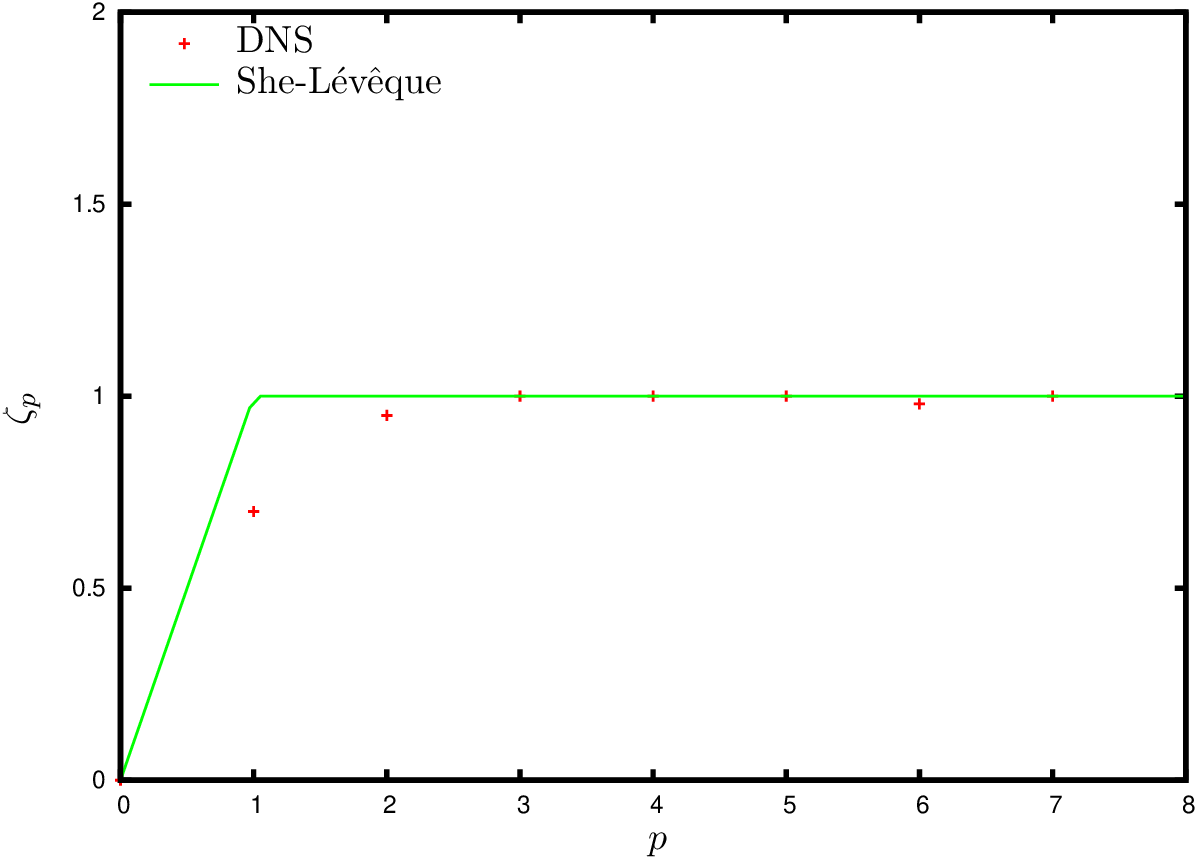}
  \includegraphics[width=0.49\textwidth]{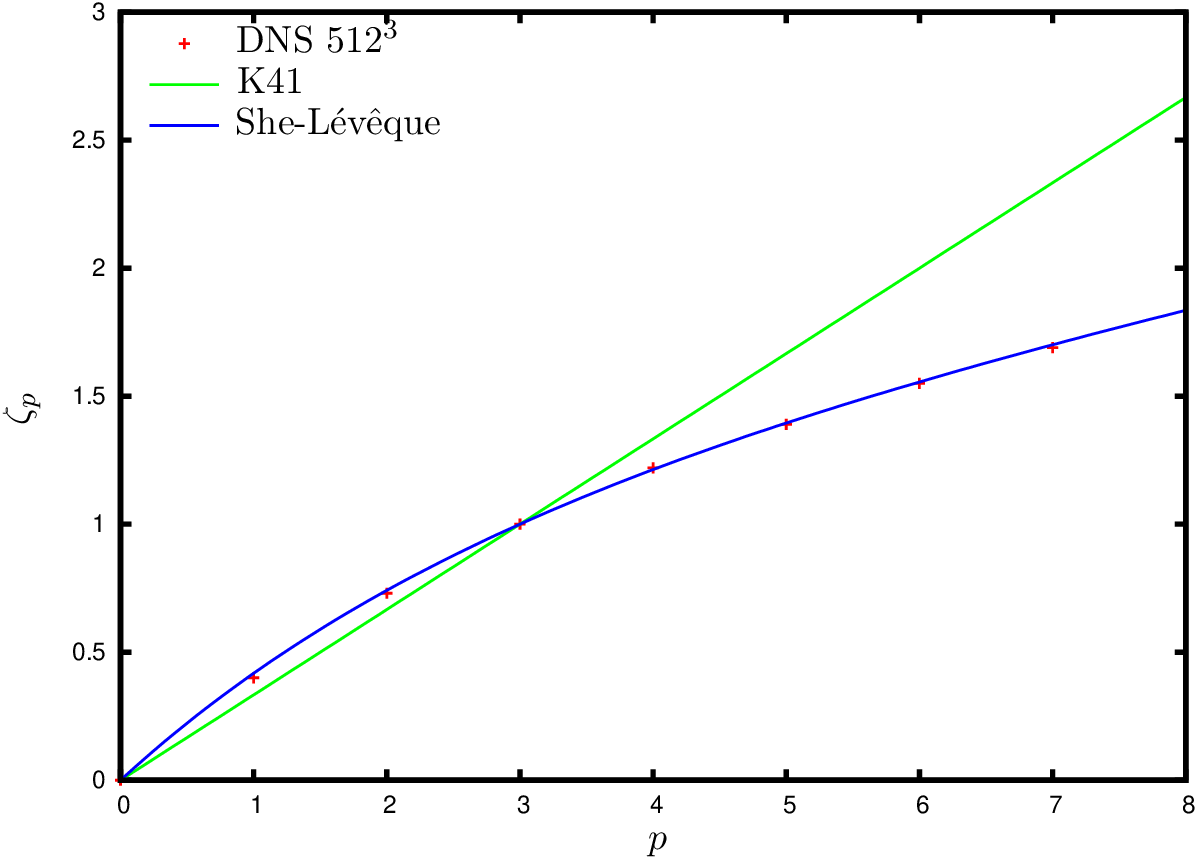}  
  \caption{Scaling exponents for fully developed turbulence,
    comparison between theory and numerical simulation. \textbf{Left:}
    Burgers equation. \textbf{Right:} Presented model equation.}
  \label{fig:zetap}
\end{figure}

Similar to Burgers equation the introduced model equation allows for
shocks to develop, since no incompressibility condition is
stated. Nevertheless, energy decays solely via the dissipative
term. Because of this, the hypotheses of Kolmogorov may be adapted to
Euler-Burgers equation. The Richardson cascade as well as the
properties of the energy spectrum and the scaling behavior of
structure functions should agree with the conclusions of Kolmogorov
and the She-L\'ev\^eque model. On the other hand, the structures that
evolve seem to be significantly different from the vortex filaments
known from Navier-Stokes.

Figure~\ref{fig:vol} (middle) shows the most dissipative structures of
fully developed turbulent flow. The norm of the vorticity visualizes
two-dimensional folded vortex sheets as the structures that correspond
to the vortex tubes of Navier-Stokes. This suggest $C_0=1$. The
time-scaling exponent $t_l$ for the introduced equation is estimated
as $k = 2/3$, with the same reasoning as for Navier-Stokes. Thus,
\begin{equation}
\zeta_p = \frac{p}{9} + \left(1-\left(\frac{1}{3}\right)^\frac{p}{3}\right)
\end{equation}
is the prediction for the scaling exponents proposed by the
She-L\'ev\^eque model.

\begin{table}[h]
  \centering
  \begin{tabular}{c|cccccc}
    p & 1 & 2 & 4 & 5 & 6 & 7 \\
    \hline
    $\alpha_p (\mathrm{P}\vec{u})$ & 0.4 & 0.73 & 1.22 & 1.39 & 1.55 & 1.69 \\
    $\zeta_p$ & 0.418 & 0.751 & 1.213 & 1.395 & 1.556 & 1.701 \\
  \end{tabular}
  \caption[Scaling exponents $\zeta_p$ for new model equation]{Scaling exponents 
    $\zeta_p$ for the new model equation. The $\alpha_p$ for $\mathrm{P} 
    \vec{u}$ measured with ESS are compared to the $\zeta_p$ predicted by 
    the She-L\'ev\^que model.}
  \label{tab:zetapEB}
\end{table}

Table~\ref{tab:zetapEB} features the results measured via ESS for the
scaling exponents of $\mathrm{P} \vec{u}$. As can be seen in figure
\ref{fig:zetap} (right) the numerical data agrees very well with the
prediction of the She-L\'ev\^eque model. The solenoidal
field is consistent with theory from low orders of $p$ up to the
highest order that was measured.


\section{Global solutions}
\label{sec:global}

In this section, we show global regularity for equation
(\ref{eq:eb}) for suitable initial data without any size restrictions. For this,
we proof the remarkable property, that this equation possesses an infinite
number of conserved quantities in the inviscid limit. Especially, the finiteness
of the $L^4$-norm of the velocity field coupled with
standard estimates for the $H^1$-norm enables us to show global regularity.

The problem of whether the three-dimensional incompressible
Navier-Stokes equations can develop a finite time singularity from
smooth initial conditions or if it has global solutions remains
unresolved (see
\cite{temam:1984,constantin-foias:1988,majda-bertozzi:2002} and the
references therein). The answer to this important question is
recognized as one of the Millennium prize problems
\cite{fefferman:2006,ladyzhenskaya:2003}.

Despite the complexity of the topic, a lot of progress has been made
on this field in the past. For the two-dimensional case, global-in-time
existence of unique weak and strong solutions is well-known (see
\cite{temam:1984,constantin-foias:1988}). In three dimensions weak
solutions are known to exist globally in time. For strong solutions,
existence and uniqueness is known for a short interval of time which
depends continuously on the initial data \cite{kato:1970}. Many
results published in the past, starting with \cite{serrin:1962},
provide criteria for the global regularity of solutions via conditions
applied to the velocity field \cite{berselli:2002,cao-titi:2008} or
components thereof \cite{kukavica-ziane:2005}, the vorticity
\cite{beale-kato-majda:1984}, its direction
\cite{constantin-fefferman:1993} or to the pressure field
\cite{chae-lee:2001,berselli-galdi:2002}.

The theory for the compressible Navier-Stokes equation is less well
developed, and we will not attempt a summary here. The
multi-dimensional Burgers equation \cite{burgers:1974} can be regarded
as a crude simplification of this model.  Global existence and
uniqueness of strong solutions can be established in two and
three-dimensions for suitably small initial conditions, much as with
the Navier-Stokes system. Irrotational flows do possess global
solutions for large data in arbitrary dimension, thanks to the
Cole-Hopf transformation \cite{cole:1951,hopf:1950}. However, there is
no multidimensional weak theory because of the absence of a mechanism
for energy dissipation, unlike Navier-Stokes.

The situation for this new modified Navier-~Stokes like equation is rather
different. In this section, the existence of global solutions is proven for the
model equation (\ref{eq:eb}) in a domain $\Omega$ which shall either be $\rr^3$
or a periodic cube in $\rr^3$.

\begin{theorem}
  \label{th:main}
  Let $\vec{u}_0\in H^1(\Omega)$.  Let $\vec{f}\in
  L^2_{loc}(\rr^+,L^2(\Omega))\cap L^1_{loc}(\rr^+,L^4(\Omega))$.
  Then the initial value problem for the model equation (\ref{eq:eb})
  has a unique global solution
  \[
  u\in C(\rr^+,H^1(\Omega))\cap L^2_{loc}(\rr^+,H^2(\Omega))\;.
  \]
\end{theorem}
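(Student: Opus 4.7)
The plan is to combine a standard local existence theorem with \emph{a priori} bounds that prevent finite-time blow-up of $\|\vec u(t)\|_{H^1}$. Local well-posedness in the class $C([0,T];H^1(\Omega))\cap L^2([0,T];H^2(\Omega))$ will follow by a Galerkin approximation (or a contraction-mapping argument), exactly as for the Navier-Stokes equation: the Leray projector $\pp$ is bounded on every $L^p$, $1<p<\infty$, so the nonlinearity $\pp[\vec u]\cdot\nabla\vec u$ is no worse than the Navier-Stokes one. Uniqueness in this class will be obtained by writing the equation for the difference of two solutions and running a Gr\"onwall argument on its $L^2$ norm, again using $\nabla\cdot\pp[\vec u]=0$. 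The remainder of the proof is devoted to ruling out blow-up.

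The key new ingredient is an \emph{a priori} $L^p$ estimate for every $p\in[2,\infty)$. Testing (\ref{eq:eb}) against $|\vec u|^{p-2}\vec u$ makes the convective term vanish, since $\nabla\cdot\pp[\vec u]=0$ and
\[
|\vec u|^{p-2}\vec u\cdot\bigl(\pp[\vec u]\cdot\nabla\vec u\bigr) = \tfrac{1}{p}\,\pp[\vec u]\cdot\nabla|\vec u|^p
\]
integrates to zero. The viscous term contributes the nonnegative quantity $\nu\int|\vec u|^{p-2}|\nabla\vec u|^2\,dx + \nu(p-2)\int|\vec u|^{p-2}\bigl|\nabla|\vec u|\bigr|^2\,dx$, and H\"older on the forcing reduces matters to $\tfrac{d}{dt}\|\vec u\|_{L^p}\le\|\vec f\|_{L^p}$, so
\[
\|\vec u(t)\|_{L^p}\le\|\vec u_0\|_{L^p} + \int_0^t\|\vec f(s)\|_{L^p}\,ds
\]
as long as the solution exists. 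This is the infinite family of (inviscid) conservation laws advertised in the abstract. Specializing to $p=4$ and invoking the hypothesis $\vec f\in L^1_{loc}(\rr^+;L^4)$ yields $\vec u\in L^\infty_{loc}(\rr^+;L^4(\Omega))$ with a bound depending only on the data.

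With this in hand I close the $H^1$ estimate by testing (\ref{eq:eb}) against $-\Delta\vec u$. The only nontrivial term is the cubic integral $\int(\pp[\vec u]\cdot\nabla\vec u)\cdot\Delta\vec u\,dx$, which I estimate by H\"older, boundedness of $\pp$ on $L^4$, and the three-dimensional Gagliardo--Nirenberg interpolation $\|\nabla\vec u\|_{L^4}\le C\|\nabla\vec u\|_{L^2}^{1/4}\|\Delta\vec u\|_{L^2}^{3/4}$, giving
\[
\Bigl|\int(\pp[\vec u]\cdot\nabla\vec u)\cdot\Delta\vec u\,dx\Bigr| \le C\|\vec u\|_{L^4}\,\|\nabla\vec u\|_{L^2}^{1/4}\,\|\Delta\vec u\|_{L^2}^{7/4}.
\]
Young's inequality absorbs part of the dissipation on the right and leaves
\[
\frac{d}{dt}\|\nabla\vec u\|_{L^2}^2 + \nu\|\Delta\vec u\|_{L^2}^2 \le C(\nu)\|\vec u\|_{L^4}^8\,\|\nabla\vec u\|_{L^2}^2 + C\|\vec f\|_{L^2}^2.
\]
Since $\|\vec u\|_{L^4}^8$ and $\|\vec f\|_{L^2}^2$ are locally integrable in $t$ by the previous step and the hypothesis on $\vec f$, Gr\"onwall bounds $\|\nabla\vec u\|_{L^2}$ on every finite interval and simultaneously gives $\vec u\in L^2_{loc}(\rr^+;H^2)$; the standard continuation criterion then yields the claimed global solution.

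The main obstacle is really the narrow margin in this last step: the cubic term is formally critical in three dimensions, and the power $\|\Delta\vec u\|_{L^2}^{7/4}$ comes out strictly below $\|\Delta\vec u\|_{L^2}^{2}$ only because the $L^4$ bound from the second step replaces what would otherwise be an $L^6$ (hence $H^1$) factor. Without the extra $L^p$ conservation law special to this modified equation, the same estimate would reproduce the well-known supercriticality of the three-dimensional Navier-Stokes energy method.
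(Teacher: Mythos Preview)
Your proof is correct and follows the same overall strategy as the paper: the $L^p$ estimate obtained by testing against $|\vec u|^{p-2}\vec u$ (the convective term dropping out because $\nabla\cdot\pp\vec u=0$), followed by an $H^1$ energy estimate from testing against $-\Delta\vec u$, closed via Gr\"onwall using the $L^4$ bound. The one place you differ is in the treatment of the cubic term $\int_\Omega(\pp\vec u\cdot\nabla\vec u)\cdot\Delta\vec u\,d\mathbf x$: you apply H\"older directly as $\|\pp\vec u\|_{L^4}\|\nabla\vec u\|_{L^4}\|\Delta\vec u\|_{L^2}$ and then invoke the Calder\'on--Zygmund boundedness of $\pp$ on $L^4$, whereas the paper first integrates by parts (using $\nabla\cdot\pp\vec u=0$) to rewrite the integral as $\int_\Omega\partial_j\partial_k u_i\,\partial_k(\pp\vec u)_j\,u_i\,d\mathbf x$ and applies H\"older as $\|\nabla^2\vec u\|_{L^2}\|\nabla\pp\vec u\|_{L^4}\|\vec u\|_{L^4}$, so that only $L^2$-boundedness of $\pp$ (plus commutation with derivatives) is needed. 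Both routes land on the identical bound $C\|\Delta\vec u\|_{L^2}^{7/4}\|\nabla\vec u\|_{L^2}^{1/4}\|\vec u\|_{L^4}$, so the difference is cosmetic; your version is a touch more direct, the paper's a touch more elementary.
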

The aim is to show that the solution remains {\em a priori} bounded in
$L^\infty([0,T),H^1(\Omega))\cap L^2([0,T),H^2(\Omega))$ for any
$T>0$, which implies its existence and uniqueness with standard
arguments comparable to e.g.  \cite{temam:1984,temam:1988}.
Throughout the argument, we denote the Euclidean norm of the vector
$\vec{u} = \sum_i u_i \vec{e}_i$ by $u= (\sum_i u_i^2)^{1/2}$.
We first prove the following lemma:
\begin{lemma}
  \label{lem:n}
  Let $\vec{u}_0$, $\vec{f}$, $\Omega$ be defined as above. Then the
  quantity $\|\uu(t)\|_{L^{p}}$ remains finite for $2 \le p \in \rr$.
\end{lemma}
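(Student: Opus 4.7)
The strategy is an \emph{a priori} $L^p$ energy estimate obtained by testing equation~(\ref{eq:eb}) against the multiplier $|\uu|^{p-2}\uu$, exploiting the fact that the convecting vector field $\pp[\uu]$ is solenoidal. Formally, taking this scalar product and integrating over $\Omega$ gives a time-derivative contribution $\tfrac{1}{p}\tfrac{d}{dt}\|\uu\|_{L^p}^p$.

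The decisive cancellation occurs in the convective term: using $u_i\partial_j u_i=\tfrac12\partial_j|\uu|^2$, one rewrites $\bigl(\pp[\uu]\cdot\nabla\bigr)\uu\cdot|\uu|^{p-2}\uu = \tfrac{1}{p}\,\pp[\uu]\cdot\nabla|\uu|^p$, and integration by parts together with $\nabla\cdot\pp[\uu]=0$ makes this integral vanish. This is precisely the mechanism responsible for the infinitely many $L^p$-conserved quantities in the inviscid, unforced case. The viscous term, after integration by parts, produces
\[
-\nu\int_\Omega |\uu|^{p-2}|\nabla\uu|^2\,dx \;-\; \nu(p-2)\int_\Omega |\uu|^{p-2}\bigl|\nabla|\uu|\bigr|^2\,dx,
\]
both summands nonpositive for $p\ge 2$ and thus discardable in an upper bound. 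Bounding the forcing term by H\"older's inequality as $\|\ff\|_{L^p}\|\uu\|_{L^p}^{p-1}$ and simplifying yields $\tfrac{d}{dt}\|\uu\|_{L^p}\le\|\ff\|_{L^p}$, whence
\[
\|\uu(t)\|_{L^p}\;\le\;\|\uu_0\|_{L^p}+\int_0^t\|\ff(s)\|_{L^p}\,ds,
\]
which is finite on any bounded time interval under the hypotheses of Theorem~\ref{th:main}. The case $p=4$, which is the one actually needed to close the $H^1$ estimate, is covered because $H^1(\Omega)\hookrightarrow L^4(\Omega)$ and $\ff\in L^1_{loc}(\rr^+,L^4)$; the endpoint $p=2$ reduces to the usual energy identity.

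The main technical obstacle is justifying these manipulations rigorously, since the multiplier $|\uu|^{p-2}\uu$ is not smooth enough to be tested against a merely weak solution. The standard remedy is to perform the computation on a Galerkin (or mollified) approximation in which $\uu$ is smooth and classical integration by parts applies, then derive the bound uniformly in the approximation parameter and pass to the limit; alternatively, one may replace $|\uu|^{p-2}$ by $(|\uu|^2+\varepsilon)^{(p-2)/2}$ and send $\varepsilon\to 0$, the monotonicity and definite sign of each term guaranteeing convergence.
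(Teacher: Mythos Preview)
Your proposal is correct and follows essentially the same route as the paper: test against $|\uu|^{p-2}\uu$, use $\nabla\cdot\pp\uu=0$ to annihilate the convective term, observe that the viscous contribution has a definite sign, and bound the forcing by H\"older to arrive at $\|\uu(t)\|_{L^p}\le\|\uu_0\|_{L^p}+\int_0^t\|\ff(s)\|_{L^p}\,ds$. The paper organizes the computation slightly differently---it first records the pointwise identity obtained by dotting with $\uu$ and then multiplies by $|\uu|^{p-2}$, and it retains the two nonnegative dissipative integrals in the final estimate rather than discarding them---but the substance and the resulting bound are identical.
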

\begin{proof}
  Taking the Euclidean inner product of (\ref{eq:eb}) with $\uu$
  yields the identity
  \begin{equation}
    \label{i1}
    \frac12\left(\frac{\partial}{\partial t}u^2+\pp\uu\cdot\nabla u^2\right)=\frac\nu2\Delta u^2 
    -\nu |\nabla \uu|^2 + \ff\cdot\uu\;.
  \end{equation}
  Integrate (\ref{i1}) over $\Omega$, use the fact that $\pp\uu$ is divergence free, and then apply  
  the Cauchy-Schwarz inequality to obtain
  \begin{equation}
    \frac12\frac{\partial}{\partial t} \|\uu\|\ellts+\nu\|\nabla \uu\|\ellts\le\|\ff\|\ellt\| \uu\|\ellt\;.
  \end{equation}
  Defining $x(t)=\frac12\left(\|\uu(t)\|\ellts+\int_0^t\nu\|\nabla \uu(s)\|\ellts ds\right)$, we have that
  \[
  x'(t)\le \|\ff(t)\|\ellt(2x(t))^{1/2}\;.
  \]
  Upon integration, this gives the inequality
  \begin{equation}
    \label{elltwo}
    \|\uu(t)\|\ellts+\int_0^t\nu\|\nabla \uu(s)\|\ellts ds \le\left( \|\uu_0\|\ellt+\int_0^t\|\ff(s)\|\ellt ds \right)^2\;.
  \end{equation}
  With this estimate the lemma is shown for the case $p=2$.

  Let $2 \le n \in \rr$ and multiply the identity (\ref{i1}) by $u^{2(n-1)}$:
  \begin{eqnarray*}
    \frac1{2n}\left(\frac{\partial}{\partial t}u^{2n}+\pp\uu\cdot\nabla u^{2n}\right)
    &=\frac\nu2\left(\frac{\partial}{\partial x_j}(u^{2(n-1)}\frac{\partial}{\partial x_j} u^2 )-\frac{4(n-1)}{n^2}|\nabla u^n|^2\right)\\
    &\quad-\nu u^{2(n-1)} |\nabla \uu|^2 + u^{2(n-1)}\ff\cdot\uu\;.
  \end{eqnarray*}
  Integrate this over $\Omega$ and apply H\"older's inequality:
  \[
  \frac1{2n}\frac{\partial}{\partial t}\|\uu\|_{L^{2n}}^{2n}+\int_\Omega\left(\textstyle{\frac{2\nu(n-1)}{n^2}} \left|\nabla u^n\right|^2
    +\nu u^{2(n-1)} |\nabla \uu|^2\right)d\mathbf{x}\le \|\ff\|_{L^{2n}}\|\uu\|_{L^{2n}}^{2n-1}
  \]
  If we let
  \[
  y(t)=\frac1{2n}\|\uu(t)\|_{L^{2n}}^{2n}+\int_0^t\int_\Omega\left(\textstyle{\frac{2\nu(n-1)}{n^2}} \left|\nabla u^n(s)\right|^2
    +\nu u^{2(n-1)} (s)|\nabla \uu(s)|^2\right)d\mathbf{x}ds\;,
  \]
  then we obtain
  \[
  y'(t)\le \|\ff(t)\|_{L^{2n}} (2n\;y(t))^{\frac{2n-1}{2n}}\;.
  \]
  This leads to the estimate
  \begin{eqnarray}
    \label{elltn}
    \|\uu(t)\|_{L^{2n}}^{2n}+2n\int_0^t\int_\Omega&\left(\textstyle{\frac{2\nu(n-1)}{n^2}} \left|\nabla u^n(s)\right|^2
      +\nu u^{2(n-1)} (s)|\nabla \uu(s)|^2\right)d\mathbf{x}ds\\
    \nonumber
    &\le\left(\|\uu_0\|_{L^{2n}}+\int_0^t\|\ff(s)\|_{L^{2n}}ds\right)^{2n}\;.
  \end{eqnarray}
\end{proof}
\noindent
\textbf{Remark:} This key argument fails for the case of the
Navier-Stokes equation. At the same time, this estimate 
establishes an infinite number of conserved quantity in the
unforced inviscid case.\\

\begin{proof}[Proof of Theorem \ref{th:main}]
  Take the $L^2$-inner product of (\ref{eq:eb}) with $\Delta
  \uu$ and integrate by parts:
  \[
  \frac12\frac{\partial}{\partial t} \|\nabla \uu\|\ellts+\nu\|\Delta \uu\|\ellts
  =\underbrace{\int_\Omega (\pp\uu\cdot\nabla \uu)\cdot\Delta\uu \;d\mathbf x}_{\mathrm (i)}
  +\underbrace{\int_\Omega\ff\cdot\Delta \uu \;d\mathbf x}_{\mathrm(ii)}\;.
  \]
  The forcing term (ii) has the bound
  \[
  \int_\Omega\ff\cdot\Delta \uu d\mathbf x\le \|\ff\|\ellt\|\Delta \uu\|\ellt\le \frac\nu4\|\Delta \uu\|\ellts+\frac 1\nu\|\ff\|\ellts\;.
  \]
  The nonlinear term (i) is estimated as follows:
  \begin{eqnarray*}
    \int_\Omega (\pp \uu\cdot\nabla \uu)\cdot\Delta \uu\; d\mathbf x
    &=-\int\frac{\partial}{\partial x_k}u_i\frac{\partial}{\partial x_k}\left((\pp \uu)_j\frac{\partial}{\partial x_j}u_i\right)\; d\mathbf x\\
    &=-\int\frac{\partial}{\partial x_k}u_i\left((\pp \uu)_j\frac{\partial}{\partial x_j}\frac{\partial}{\partial x_k}u_i
      +\frac{\partial}{\partial x_k}(\pp \uu)_j\frac{\partial}{\partial x_j}u_i\right)\; d\mathbf x\\
    &=-\int\left(\frac12 (\pp \uu)_j\frac{\partial}{\partial x_j}|\nabla \uu|^2 
      + \frac{\partial}{\partial x_k}u_i\frac{\partial}{\partial x_j}(\frac{\partial}{\partial x_k}(\pp \uu)_ju_i)\right)\; d\mathbf x\\
    &=\int\frac{\partial}{\partial x_j}\frac{\partial}{\partial x_k}u_i\frac{\partial}{\partial x_k}(\pp \uu)_ju_i\; d\mathbf x\\
    &\le\|\nabla^2\uu\|\ellt\|\nabla \pp \uu\|_{L^4}\|\uu\|_{L^4}\;.
  \end{eqnarray*}
  The second norm above is handled by interpolation.  We first note
  that
  \[
  \|\nabla \pp \uu\|_{L^4}\le \|\nabla \pp \uu\|_{L^6}^{3/4}\|\nabla \pp \uu\|_{L^2}^{1/4}\;.
  \]
  Now when $\Omega=\rr^3$, the Sobolev embedding theorem gives 
  \begin{equation}
    \label{ellsix}
    \|\nabla \pp \uu\|_{L^6}\le C\|\nabla^2 \pp \uu\|\ellt\;.
  \end{equation}
  When $\Omega$ is a periodic domain, the norm on the right must be
  replaced by $\|\nabla \pp \uu\|_{H^1}$.  However, since $\nabla \pp
  \uu$ has zero mean, this is bounded again by $C\|\nabla^2 \pp
  \uu\|\ellt$, by the Poincar\'e inequality. Therefore, (\ref{ellsix})
  holds in both cases. Using the facts that the operator $\pp$
  commutes with derivatives and that it is a projection in $L^2$, we
  have that
  \[
  \|\nabla \pp\uu\|\ellt\le \|\nabla\uu\|\ellt 
  \quad\mbox{and}\quad
  \|\nabla^2 \pp\uu\|\ellt\le \|\nabla^2\uu\|\ellt\;.
  \]
  Next, we use integration by parts to obtain the simple ellipticity
  identity
  \begin{eqnarray}
    \label{elliptic}
    \|\nabla^2 \uu\|\ellts
    &=\int_\Omega \frac{\partial}{\partial x_j}\frac{\partial}{\partial x_k}u_i
    \frac{\partial}{\partial x_j}\frac{\partial}{\partial x_k}u_i \;d\mathbf{x}\\
    \nonumber
    &=\int_\Omega \frac{\partial}{\partial x_j}\frac{\partial}{\partial x_j}u_i
    \frac{\partial}{\partial x_k}\frac{\partial}{\partial x_k}u_i \;d\mathbf{x}\\
    \nonumber
    &=\|\Delta \uu\|\ellts\;.
  \end{eqnarray}
  Combining these observations with Young's inequality, we conclude
  that the nonlinear term (i) is bounded by
  \[
  C\|\Delta \uu\|\ellt^{7/4}\|\nabla\uu\|\ellt^{1/4}\|\uu\|_{L^4}
  \le \frac{\nu}{4}\|\Delta\uu\|\ellt^2+\frac{C}{\nu^7}\|\nabla\uu\|\ellt^2\|\uu\|_{L^4}^8\;.
  \]
  Altogether, we get the inequality
  \[
  \frac{\partial}{\partial t}\|\nabla \uu\|\ellts+\nu\|\Delta \uu\|\ellts\le \frac {C}{\nu^7}\|\nabla \uu\|\ellt^2\|\uu\|_{L^4}^8+\frac C\nu\|\ff\|\ellts\;.
  \]
  Using Gronwall's inequality, we find that
  \begin{eqnarray}
    \label{b3}
    \|\nabla \uu\|\ellts+\nu\int_0^t\|\Delta \uu(s)\|\ellts \;ds \le& \|\nabla \uu_0\|\ellts\exp\frac {C}{\nu^7}\int_0^t\|\uu(s)\|_{L^4}^8 \; ds\\
    \nonumber
    &+\frac C\nu\int_0^t \left(\exp\frac C{\nu^7}\int_s^t\|\uu(\sigma)\|_{L^4}^8\;d\sigma\right)\|\ff(s)\|\ellts\;ds\;.
  \end{eqnarray}
  Combining (\ref{elltwo}) and Lemma \ref{lem:n} with $p=4$, and
  (\ref{b3}), we see that the quantity
  \[
  \|\uu(t)\|^2_{H^1}+\int_0^t\nu[\|\nabla\uu(s)\|^2\ellt+\|\Delta\uu(s)\|^2\ellt]\;ds
  \]
  remains finite.  However, by (\ref{elliptic}) and the fact that 
  $L^\infty_{loc}(\rr^+,L^2(\Omega)) \subset L^2_{loc}(\rr^+,L^2(\Omega))$
  we have that
  \[
  \|\uu(t)\|^2_{H^1}+\int_0^t\nu\|\uu(s)\||^2_{H^2(\Omega)}\;ds
  \]
  also remains finite.
\end{proof}


\section{Final remarks}

In this paper, a modified Navier-Stokes equation is presented. Its
dynamics and turbulent behavior are studied in terms of the scaling
properties of its structure functions. The most dissipative structures
are identified as vortex sheets of co-dimension~1, which allows us to
compare the numerically measured scaling exponents to a modified
phenomenologically based She-L\'ev\^eque approach.

Furthermore, we prove the existence of global solutions for this
equation. A remarkable consequence of this proof is the existence of an
infinite number of conserved quantities $\|\mathbf{u}\|_{L^p}$ in the
ideal (non-viscous) case without forcing. This property is not only
responsible for the existence of global solutions but should show up
in the statistics of intermittent turbulent fluctuations. Work in this
direction is in progress.

\section*{Acknowledgment}

This work benefited from support through DFG-GR 967/3-1.


\bibliographystyle{model1-num-names}

\end{document}